\m@th\displaystyle{##}$\hfil}
\m@th\displaystyle{##}$\hfil}{\lbrace}{.}
\begin{document}
\title{Efficient algorithm for linear diophantine equations in two variables}
%
%
\author{Mayank Deora \and
Pinakpani Pal}
\authorrunning{M. Deora, P. Pal}
%
\institute{Electronics and Communication Sciences Unit, Indian Statistical Institute, Kolkata - 700108, India
\\
\email{mbdeora@gmail.com, pinak@isical.ac.in}\\
}
\maketitle              
\begin{abstract}
Solving linear diophantine equations in two variables have applications in computer science and mathematics. In this paper, we revisit an algorithm for solving linear diophantine equations in two variables, which we refer as DEA-R algorithm. The DEA-R algorithm always incurs equal or less number of recursions or recursive calls as compared to extended euclidean algorithm. With the objective of taking advantage of the less number of recursive calls , we propose an optimized version of the DEA-R algorithm as DEA-OPTD. In the recursive function calls in DEA-OPTD, we propose a sequence of more efficient computations. We do a theoretical comparison of the execution times of DEA-OPTD algorithm and DEA-R algorithm to find any possible bound on the value of $c$ for DEA-OPTD being better than DEA-R. We implement and compare an iterative version of DEA-OPTD (DEA-OPTDI) with two versions of a widely used algorithm on an specific input setting. In this comparison, we find out that our algorithm outperforms on the other algorithm against atleast 96\% of the inputs.

\keywords{Diophantine equations  \and Recursive function  \and  Dynamic programming \and Extended euclid's algorithm}
\end{abstract}

\section{Introduction}
For solving linear diophantine equations in two variables, we need to find integer values of the variables. These   linear diophantine  equations are written as:
\begin{equation}
   ax+by=c \label{eq1} 
\end{equation}
   
where $a$, $b$ and $c$ are known integers. $x$ and $y$ are integer variables. Due to bazout's lemma \cite[p.~10]{seroul2000programming}, their solution will exist only if either $c=0$ or it is divisible by greatest common divisor of $a$ and $b$.
 \par In \cite{deora2023algorithm}, an algorithm for solving two variable linear diophantine equations have been proposed, whose worst case time complexity is $\theta(\log n)$ ($\log n$ is  $max(a,b)$). Extended euclid's algorithm is the most widely used algorithm for solving linear diophantine equations in two variables\cite{knuth2014art}. It is an extension of euclidean algorithm for computing gcd (greatest common divisor). Similarly binary gcd algorithm can also be extended to solve two variable linear diophantine equations \cite{knuth2014art}. The time complexity of both extended euclid's algorithm and extension of binary gcd algorithm is $\theta(\log n)$ ($n$ is the bit-size of $max(a,b)$).
     In \cite{lankford1989non} and \cite{ajili1995complete} there are  algorithms which are used for finding non-negative integer basis of linear diophantine equations.   
     \cite{clausen1989efficient} introduces a graph based algorithm, which can find minimal solutions of a diophantine equation of the type $ax-by=c$. There, the time complexity of the graph based algorithm is not discussed but its implementation given in the same paper clarifies that its time requirement is proportional to $n$, where $n$ is the maximum value of the coefficients ($\max(a,b)$). 
     \cite{clausen1989efficient} and \cite{lankford1989non} propose algorithms for solving linear diophantine equations, which take exponential time proportional to the bit-size of the input,  but they are applicable in associative-commutative unification. \cite{kumar2022alternative} and \cite{sumarti2023method} propose algorithms, which are based on optimization techniques to solve diophantine equations.  As a part of the implementation of these algorithms, some instances of two variable linear diophantine equations are solved. 
      
     \cite{bradley1971algorithms}, \cite{chou1982algorithms}, \cite{lazebnik1996systems} and \cite{esmaeili2001class} propose algorithms for solving system of linear diophantine equations. There are algorithms to solve  system of linear diophantine equations with upper bound and lower bound on variables in \cite{ramachandran2006use} and \cite{aardal2000solving}. 
     \par We propose an algorithm to solve linear diophantine equations in two variables which is an optimized version of the algorithm proposed in \cite{deora2023algorithm}, referred as DEA-R algorithm.  There, in each recursion, the DEA-R algorithm uses the value of $c$ and previous value returned by the  recursive function to compute solutions. The repetitive usage of $c$ in each recursion implies that the bit size of $c$ is a decisive argument in finding the time complexity of the algorithm. 
     \par In the current work, we modify DEA-R such that the value of $c$ is partially removed from each recursion, so that it is not dominant in determining the time complexity of the algorithm. The proposed algorithm is referred as DEA-OPTD and  original version is referred as DEA-R. Both algorithms incur equal or less number of recursions or recursive calls as compared to extended euclidean algorithm. Less number of recursive calls does not imply less execution time, because in each recursive call of DEA-R (or DEA-OPTD) and extended euclid's algorithm, there are different arithmetic operations. In DEA-OPTD, we take advantage from the less number of recursive calls by reducing the size of operands in some arithmetic operations in the recursive function. 
     \par We perform theoretical comparison between DEA-R and DEA-OPTD algorithm to find some bound (or inequality )containing  the values of $c$ and intermediate values observed during subsequent recursive calls. When this bound is satisfied we conclude that DEA-OPTD algorithm is better than DEA-R algorithm.  

     \par Since DEA-OPTD algorithm is recursive, it must have a run time overhead in implementation. So, we propose an iterative version of DEA-OPTD algorithm referred as DEA-OPTDI.
     As a case study, we implement DEA-OPTDI algorithm and compare it with DEA-I algorithm (given in \cite{deora2024averagecaseefficientalgorithm}) and two variations of extended euclid's algorithm. DEA-I algorithm is iterative version of DEA-R algorithm. DEA-I algorithm is given in \cite{deora2024averagecaseefficientalgorithm}. We refer the two versions of extended euclid's algorithm as EEA-I and EEA-2. EEA-I (given in \cite{deora2024averagecaseefficientalgorithm}) is an iterative version of extended euclid's algorithm. EEA-2 algorithm is given on the github link \cite{EEAGit1}. For the comparison, we select 100000 random input instances $(a,b,c)$ where $a,b$ and $c$, all the three parameters lie between 1 to $2^{10}$ (both inclusive). After the implementation of this  comparison, we find out that we are able to optimize DEA-R in the form of DEA-OPTD on an specific input setting. The comparison result depicts that on the chosen set of inputs DEA-OPTDI is always the most efficient algorithm among the four algorithms.
      
\section{Notations} 
We use function call, recursive function call and recursive call interchangeably throughout this paper. All the terms refer the same thing here, because the function(in DEA-R or in DEA-OPTD algorithm), that we refer in theoretical analysis is a recursive function.
We assume that input in the first recursive function call of the proposed algorithm, is ($a_1,a_2,c$). We assume that quotient of division of $a_1$ by $a_{2}$ is $q_1$ and the remainder (if not zero) is $a_{3}$. 
Similarly for any $i>1$, quotient of division of $a_i$ by $a_{i+1}$ is $q_i$ and remainder will be $a_{i+2}$. 
i.e.
$$
a_i=q_ia_{i+1}+a_{i+2}
$$
Here, $a_{i+2}$ can not be 0.
In the proposed algorithm, after first recursive function call, inputs which are passed to the  next recursive function calls are $(a_2,a_3,c)$, $(a_3,a_4,c)..........$ upto $(a_{k},a_{k+1},c)$. We reserve the notations  $\mathbf{c_1},\mathbf{c_2},\mathbf{c_3},.........,\mathbf{c_k}$ to represent the sets of special values of $c$. Each $c \in \mathbf{c_i}$ is  in the following general form:

$$c=a_i+Q_ia_{i+1}$$ where $Q_i$ can be any integer (positive, negative or 0). 

\par Since in each recursive call in DEA-R and DEA-OPTD algorithms value of $c$ does not change, in section 3 and appendix, we omit the third argument of function $f(a,b,c)$ and write it as $f(a,b)$. Thus $f(a,b)$ and $f(a,b,c)$ in these algorithms refer to the same recurrence. 
 \section{Algorithms}

In this section we observe the algorithm given in \cite{deora2023algorithm}, which we refer as DEA-R algorithm. 
We write a brief overview of this algorithm which is useful to propose the modified algorithm. After this we propose the 
modified version of the DEA-R algorithm, which we refer as DEA-OPTD algorithm.  

\subsection{DEA-R Algorithm}
DEA-R algorithm solves the diophantine equation $ax+by=c$ and finds integer $y$. From $y$, we compute $x$ as $x=\frac{c-by}{a}$. For an input $(a,b,c)$, let $c$ belongs to all the sets in the set $\{\mathbf{c_{i_1}},.....,\mathbf{c_{i_m}}\}$, such that $i_1<.......<i_m$. Then, on $i_1^{th}$ recursive call of the function  $f$ of DEA-R algorithm, the algorithm executes its first return statement. The reason for this is given in \cite{deora2024averagecaseefficientalgorithm}. To briefly explain the reason, we can observe that in the $i_1^{th}$ recursive call, the input to the recursive function $f$ is $(a_{i_1},a_{{i_1}+1},c)$. $f(a_{i_1},a_{{i_1}+1},c)$ executes the else if statement at line 5, which in turn computes $(c-a_{i_1}) \bmod a_{i_1+1}$. Since $c \in \mathbf{c_{i_1}}$ and c does not belong to any set  other than elements of the family of sets $\{\mathbf{c_{i_1}},.....,\mathbf{c_{i_m}}\}$, so $(c-a_{i_1}) \bmod a_{i_1+1}=0$ (satisified in $i^{th}_1$ recursive call). Subsequently the algorithm executes its first return statement at line 7 in $i_{1}^{th}$ recursive call.

\par We use this information on the set to which $c$ belongs, to propose DEA-OPTD algorithm. 
 \begin{algorithm}
\caption{\textbf{DEA-R} Algorithm \cite{deora2023algorithm}}\label{euclid}
\hspace*{\algorithmicindent} \textbf{Input}: $a,b,c$ ($a>b ,a\ne 0,b\ne 0,c\ne 0$)  \\ \hspace*{\algorithmicindent} $a,b,c$ are integers
\newline\hspace*{\algorithmicindent}\textbf{Output}: $y$ 
\newline\hspace*{\algorithmicindent} $y$ is an integer.
\begin{algorithmic}[1]

\Procedure{{\textit{$f$}}}{$a,b,c$}
\If{($b=0$)} 
\State PRINT ("$a$ is the gcd of original inputs and it does not divide $c$")
\State \textbf{exit}

\ElsIf {$((c-a) \bmod b=0)$} 
\State  
$y\gets (c-a)/ b$
\State \Return $y$
\Else 
\State 
$y\gets\left(c-f(b,a \bmod b,c)\times a)\right/b$
\State \Return $y$
\EndIf

\EndProcedure

\end{algorithmic}
\end{algorithm}

\subsection{DEA-OPTD: an optimized version of the DEA-R Algorithm}
To design DEA-OPTD, first we derive a recursive function of second order. Assume that $c\in \mathbf{c_i}$ such that $c=a_i+Q_ia_{i+1}$, and if $c$ belongs to any $c_j$ other than $c_i$, then $j$ would be greater than $i$. This implies that on intermediate inputs $(a_i,a_{i+1},c)$ (given to a recursive function call as arguments), function $f$ in DEA-R executes its first return statement. Then in the  DEA-R algorithm, the sequence of computations in the form of return values of the function $f$ is given as follows:
\begin{equation} \label{Eq_seq1}
    \begin{split}
&f(a_1,a_2)=\frac{c-f(a_2,a_3)a_1}{a_2}
\\
&\vdots
\\ &f(a_{i-1},a_{i})=
\frac{c-f(a_i,a_{i+1})a_{i-1}}{a_i}
\\&f(a_{i},a_{i+1})=
\frac{c-a_{i}}{a_{i+1}}=Q_i
\end{split}
\end{equation}
We substitute back the value of $f(a_i,a_{i+1})$  in $f(a_{i-1},a_i)$, to produce:

$$f(a_{i-1},a_i)=\frac{c-f(a_i,a_{i+1})a_{i-1}}{a_i}=\frac{c-Q_ia_{i-1}}{a_i}$$
Since $a_{i-1}=q_{i-1}a_i+a_{i+1}$ and $c=a_i+Q_ia_{i+1}$,
$$f(a_{i-1},a_i)=\frac{a_i+Q_ia_{i+1}-Q_ia_{i-1}}{a_i}=\frac{a_i+Q_ia_{i+1}-Q_iq_{i-1}a_i-Q_ia_{i+1}}{a_i}$$

$$=1-Q_iq_{i-1}$$
In lemma \ref{lem1}, we propose a recurrence relation using which we can find the value of $f(a_{i-2},a_{i-1})$ from the values of $f(a_i,a_{i+1})$ and $f(a_{i-1},a_i)$.  
\begin{lemma} \label{lem1}
$f$ is the recursive function given in DEA-R algorithm. $f(a_j,a_{j+1})$ is the value returned by the function $f(a_j,a_{j+1},c)$. If $c$ is in the set  $\mathbf{c_i}$, then for all $j<i-1$, the following recurrence holds:
\begin{equation}\label{eq3} f(a_j,a_{j+1})=f(a_{j+2},a_{j+3})-f(a_{j+1},a_{j+2})q_j\end{equation}
\end{lemma}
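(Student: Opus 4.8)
The plan is to establish \eqref{eq3} by unfolding the DEA-R recursion one extra level and then substituting. Recalling the set-up of this subsection, $i$ is the least index with $c\in\mathbf{c_i}$, so in every recursive call of $f$ at depth $m$ with $1\le m\le i-1$ neither the test at line~2 nor the test at line~5 succeeds (the latter because $c\notin\mathbf{c_m}$, i.e.\ $(c-a_m)\bmod a_{m+1}\ne 0$), and the call therefore executes the \emph{else} branch at line~9; hence, exactly as in the chain \eqref{Eq_seq1},
\begin{equation}\label{eq:unfold}
f(a_m,a_{m+1})=\frac{c-f(a_{m+1},a_{m+2})\,a_m}{a_{m+1}},\qquad 1\le m\le i-1 .
\end{equation}
Fixing $j$ with $1\le j<i-1$, both $m=j$ and $m=j+1$ obey $m\le i-1$, so \eqref{eq:unfold} is available for each of them, and the remainders $a_{j+2},a_{j+3}$ that appear lie within the sequence $a_1,\dots,a_{i+1}$ that the algorithm actually generates, so all terms in sight are well defined.

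The core is then a one-line computation. Rearranging the $m=j+1$ instance of \eqref{eq:unfold} gives $c=f(a_{j+1},a_{j+2})\,a_{j+2}+f(a_{j+2},a_{j+3})\,a_{j+1}$; substituting this for $c$ in the $m=j$ instance and dividing through by $a_{j+1}$ yields
\[
f(a_j,a_{j+1})=f(a_{j+2},a_{j+3})+f(a_{j+1},a_{j+2})\cdot\frac{a_{j+2}-a_j}{a_{j+1}} .
\]
By the Euclidean division identity $a_j=q_ja_{j+1}+a_{j+2}$ from the Notations section we have $a_{j+2}-a_j=-q_ja_{j+1}$, so the last fraction equals $-q_j$ and we obtain $f(a_j,a_{j+1})=f(a_{j+2},a_{j+3})-f(a_{j+1},a_{j+2})q_j$, which is \eqref{eq3}.

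I do not expect a genuine obstacle: the argument reduces to the algebra above together with the division identity. The step that needs care is the justification of \eqref{eq:unfold}, and in particular why the hypothesis is $j<i-1$ rather than $j<i$: for the $m=j+1$ instance we need the call at depth $j+1$ itself to take the else branch, which requires $j+1\le i-1$; if instead $j=i-1$, the inner call $f(a_i,a_{i+1},c)$ returns via line~7 (because $c\in\mathbf{c_i}$), so although the $m=i-1$ form of \eqref{eq:unfold} is still valid it cannot be combined in the same way and the recurrence cannot be pushed to that level. I would also note that every division occurring above is exact, since the values returned by $f$ are integers by the correctness of DEA-R \cite{deora2023algorithm}, so all the intermediate quantities are bona fide integers.
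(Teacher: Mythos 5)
Your proof is correct and follows essentially the same route as the paper's: unfold the DEA-R recursion at depths $j$ and $j+1$, substitute, and use the division identity $a_j=q_ja_{j+1}+a_{j+2}$ (you merely solve for $c$ first and substitute, while the paper expands $a_j$ first and then eliminates $c$ via the next-level relation). Your added justification of why the else branch is taken up to depth $i-1$ and why the hypothesis is $j<i-1$ is a welcome clarification but not a different argument.
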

\begin{proof}
    $$f(a_j,a_{j+1})=\frac{c-f(a_{j+1},a_{j+2})a_j}{a_{j+1}}$$
$$=\frac{c-f(a_{j+1},a_{j+2})(q_ja_{j+1}+a_{j+2})}{a_{j+1}}$$

$$\implies f(a_j,a_{j+1})=\frac{c-f(a_{j+1},a_{j+2})a_{j+2}}{a_{j+1}}-f(a_{j+1},a_{j+2})q_j$$

    We know that since:
    $$f(a_{j+1},a_{j+2})=\frac{c-f(a_{j+2},a_{j+3})a_{j+1}}{a_{j+2}}$$
    , so 
    $$  f(a_j,a_{j+1})=\frac{c-c+f(a_{j+2},a_{j+3})a_{j+1}}{a_{j+1}}-f(a_{j+1},a_{j+2})q_j$$
    Therefore
    $$f(a_j,a_{j+1})=f(a_{j+2},a_{j+3})-f(a_{j+1},a_{j+2})q_j$$
    \qed
\end{proof}
If $c \in \mathbf{c_i}$, by the sequence of computations shown in set of equations \ref{Eq_seq1} and lemma \ref{lem1}, we have the following recurrence relation:
\begin{equation}
   \label{receq1}
f(a_n,a_{n+1})=
\begin{centercases}
  f(a_{n+2},a_{n+3})-\left \lfloor \frac{a_n}{a_{n+1}} \right 
 \rfloor f(a_{n+1},a_{n+2})      & n<i-1 \\
 1-Q_i\left \lfloor \frac{a_n}{a_{n+1}} \right \rfloor 
 &n=i-1\\
 Q_i & n=i \end{centercases} 
\end{equation}

\par We use dynamic programming with memoization  and lemma \ref{lem1} to design DEA-OPTD algorithm. In DEA-OPTD, firstly we find the arguments, $a$, $b$ and $c$ for the last recursive call to $f$. Let these last arguments be $a_i$, $a_{i+1}$ and $c$. Then by sequence of computations given by set of equations \ref{Eq_seq1}, the value returned by last function call i.e. $f(a_i,a_{i+1},c)$ will be $\frac{c-a_i}{a_{i+1}}$.  Then we find the value of second last function call and compute the value returned by subsequent calls with the help of recursive relation \ref{receq1}. 
\par The last recursive call returns $(1,\frac{c-a_i}{a_{i+1}})$ as variables $f_1,f_2$. Thus the second last return value (from line 9) will be $(f_2,f_1-f_2 \times \left \lfloor \frac{a_{i-1}}{a_i} \right \rfloor) =(\frac{c-a_i}{a_{i+1}},1-\frac{c-a_i}{a_{i+1}}\left \lfloor \frac{a_{i-1}}{a_i} \right \rfloor)$. Similarly from recursive relation \ref{receq1} which is implicit in line 9 of the DEA-OPTD algorithm, it follows that  the algorithm finally returns the value of $(f(a_2,a_3),f(a_1,a_2))$ or $(f(a_2,a_3,c),f(a_1,a_2,c))$. $a_1$ and $a_2$ are $a$ and $b$ in the initial inputs ($a,b,c$) to the algorithm. From the initial equation in the sequence of equations given by \ref{Eq_seq1},  $f(a_2,a_3)$ is the value of $x$ and $f(a_1,a_2)$ is the value of $y$.

\begin{algorithm}
\caption{\textbf{DEA-OPTD} Algorithm}\label{euclid}
\hspace*{\algorithmicindent} \textbf{Input}: Integers $a,b,c$ ($a>b ,a\ne 0,b\ne 0,c\ne 0$)  \\
\hspace*{\algorithmicindent}\textbf{Output}: Integer pair $(x,y)$ such that $ax+by=c$
\newline\hspace*{\algorithmicindent} 
\begin{algorithmic}[1]

\Procedure{{\textit{$f$}}}{$a,b,c$}
\If{$b=0$} 
\State PRINT (''$a$ is the gcd of original inputs and it does not divide $c$'')
\State \textbf{exit}

\ElsIf {$(c-a) \bmod b=0$} 

\State  \Return 
$(1,(c-a)/ b)$

\Else 
\State $f_1,f_2\gets f(b,a \bmod b,c)$
\State \Return $(f_2, f_1-f_2 \times \lfloor \frac{a}{b}\rfloor)$
\EndIf

\EndProcedure

\end{algorithmic}
\end{algorithm}

\subsection{DEA-OPTDI Algorithm} 

The recursive algorihtms have an inherent runtime overhead in function calling. So, for the implementation, we develop an iterative version of DEA-OPTD algorithm. In this paper, we refer this iterative version of the DEA-OPTD algorithm as DEA-OPTDI algorithm. The exit condition for the while loop at line 4 in DEA-OPTDI algorithm is $(c-a) \bmod b = 0$, which is same as second base condition in DEA-OPTD. To store the values of $\left\lfloor \frac{a_i}{a_{i+1}} \right\rfloor$ ($a_i$ are the integers observed during execution of DEA-R or DEA-OPTD algorithm), we use an array named as $floorarray$. No solution of a diophantine equation is identified by $b=0$ condition at line 10. If the diopantine equation $ax+by=c$ is solvable, $f_1$ gets assigned by 1 at line 17. $f_2$ also gets assigned by the value $Q_i$, where $c=a_i+Q_ia_{i+1}$. These assignments are equivalent to the last return value from line 9 in DEA-OPTD algorithm.     The computations inside the  while loop at line 19 are equivalent to computations done by recursive relation \ref{receq1}, which is also done by return statement on line 9 in DEA-OPTD.
\begin{algorithm} [H]
\caption{\textbf{DEA-OPTDI} Algorithm (Iterative version of DEA-OPTD)}\label{euclid}
\hspace*{\algorithmicindent} \textbf{Input}: $a,b,c$ ($a>b ,a\ne 0,b\ne 0,c\ne 0$)  \\ \hspace*{\algorithmicindent} $a,b,c$ are integers
\newline\hspace*{\algorithmicindent}\textbf{Output}: $(x,y)$ 
\newline\hspace*{\algorithmicindent} $y$ is an integer.
\begin{algorithmic}[1]

\Procedure{\textbf{\textit{$f$}}}{$a,b,c$}
\State $no\_solution\gets 0$
\State $arraysize\gets 0$
\While{($(c-a) \bmod b \ne 0)$}
\State $floorarray[arraysize]\gets \left \lfloor \frac{a}{b} \right \rfloor$
\State $temp \gets a$
\State $a\gets b$
\State $b \gets temp \bmod b$
\State $arraysize\gets arraysize+1$
\If{($b=0$)}
\State $no\_solution\gets 1$
\State PRINT ("$a$ is the gcd of original inputs. and it does not divide $c$") 
\State \textbf{Break}
\EndIf
\EndWhile
\If{($no\_solution=0$)}
\State $f_1 \gets 1$
\State $f_2 \gets \frac{c-a}{b}$
\While{$(arraysize \ge 1)$}
\State $temp \gets f_2$
\State $f_2=f_1-f_2*floorArray[arraysize-1]$
\State $f_1=temp$
\State $arraysize \gets arraysize-1$ 

\EndWhile
\EndIf
\State \Return $(f_1,f_2)$

\EndProcedure

\end{algorithmic}
\end{algorithm}

\subsection{Comparison between DEA-R and DEA-OPTD} In this section, first we write the motivation behind modifying the DEA-R algorithm  and then present result from the comparative theoretical analysis of the running times of the modified algorithm DEA-OPTD with the DEA-R algorithm. The motivation of modifying the DEA-R algorithm is to reduce its execution time. Note that in the modified version, there is no change in the number of recursive calls. So, for reducing the execution time, it is necessary to either minimize the size of operands used in basic arithmetic operations or  minimize the number of arithmetic operations. In DEA-OPTD, we are able to reduce the size of operands. This reduction in size of operands makes it a better candidate than the DEA-R to compare against extended euclid's algorithm. So, we do this comparison in section 4. 

\par For a theoretical comparison between the two algorithms (DEA-R and DEA-OPTD) we assume that inputs to both, DEA-R and DEA-OPTD algorithms are same. Under the same inputs, both the algorithms compute the same function and incur equal number of recursive function calls. Since the function $f$ in DEA-R algorithm and $f$ in DEA-OPTD algorithm start with the same arguments and proceed with same arguments in each recursive call, both the functions incur same number of recursive function calls. We compare the execution times of DEA-OPTD algorithm and DEA-R algorithm. Specifically, we present a theoretical comparison between the cost of arithmetic operations in the corresponding recursive calls of both DEA-R and DEA-OPTD algorithms. Recursive function $f$ in both the algorithms have the same base conditions given as follows:
\begin{enumerate}
    \item $b = 0$
    \item $(c-a) \bmod b = 0$
    
\end{enumerate}We can also see that the function $f$ in both the algorithms have three divisions, but on different arguments in all function calls, except last call. In last recursive function call, there are two divisions, if the equation $ax+by=c$ is solvable. If the equation is not solvable then both the algorithms return after satisfying the first base condition ($b = 0$). There is one multiplication in both the algorithms.

Assume that we have a recursion variable $j$ corresponding to $j^{th}$ recursion.
For the recursion variable $j$, each recursion of the function $f$ computes $f(a_j,a_{j+1})$ according to the recurrence \ref{receq1}. Since recurrence relation \ref{receq1} is derived from function $f$ of DEA-R algorithm, it is satisfied by function $f$ of both the algorithms. We make an assumption that the cost of modified algorithm DEA-OPTD is less than the original algorithm DEA-R. Under this assumption, we try to find any lower or upper bound on the value of $c$. We compare the total cost for computing $f(a_j,a_{j+1})$ (for all $j$) in both the algorithms (see appendix 6.1). By the theoretical comparison given in appendix 6.1, we find that increasing or decreasing the value of $c$ does not imply that the modified algorithm, DEA-OPTD will take less time compared to original algorithm DEA-R. Thus it requires further investigation on the value of function $f$ and $c$ to conclude some result from the comparative analysis of DEA-OPTD and DEA-R algorithm.

\section{Implementation}
The implementation is  done on a 64 bit computer with intel core i5 2.60 GHz processor. To develop the codes, we use C programming language. To handle large integers we use GMP library \cite{GMP1} in C. To count the number of cpu cycles (cpu cycle count) in execution of a C program, we use \_\_rdtsc() function in it.  
\par
 We setup an implementation task as a case study. Extended euclidean algorithm is the most widely used algorithm for our problem. So, in the implementation, we compare DEA-OPTDI with EEA-I \cite{deora2024averagecaseefficientalgorithm} and EEA-2 and compare DEA-I  algorithm \cite{deora2024averagecaseefficientalgorithm} with EEA-I and EEA-2 algorithms. EEA-I algorithm is an iterative version of extended euclid's algorithm given in. EEA-2 is also an iterative version of extended euclid's algorithm given on the github link \cite{EEAGit1}. 

\par In the implementation, $a,b$ and $c$ are chosen uniformly at random between 1 to $2^{10}$ (both inclusive). Thus each input instance ($a,b,c$) contains all the three arguments $a$,$b$ and $c$ as uniform random integers between 1 to $2^{10}$ (both inclusive). Then DEA-OPTDI, EEA-I, EEA-2 and DEA-I algorithm is executed on 100000 such input instances. This program execution is repeated 10 times. On each input, we compute the cpu cycle count for execution of all the four  algorithms. We find the percentage of inputs on which DEA-I algorithm is better than EEA-I and EEA-2. Similarly, we find the percentage of inputs on which DEA-OPTDI is better than 
EEA-I and EEA-2. One algorithm being better than the other implies that the cpu cycle count in execution of the first is less than that in the execution of the second algorithm. 
\par We present the results of implementation in table \ref{tab:1} and \ref{tab:4}. Table \ref{tab:1} depicts average  cpu cycle count in the execution of DEA-OPTDI, DEA-I, EEA-2 and EEA-I algorithm. From the table \ref{tab:1}, it is clear that in each of the 10 program executions, the cpu cycle count in execution of DEA-OPTDI is least among all the algorithms. In each of the 10 experiments, average cpu cycle count in the execution of EEA-I is the second least. \par We develop DEA-OPTDI with the aim of optimizing DEA-I algorithm. So in the impelmentation, we find out the percentage of inputs (out of 100000 inputs), on which DEA-OPTDI is better than EEA-I and EEA-2 and do the same for DEA-I. We find this result on the same inputs which were used to prepare table \ref{tab:1}. We present these results in table \ref{tab:4}, which depicts that DEA-OPTDI is better on atleast 96 \% of inputs against EEA-I whereas DEA-I is better than EEA-I on atleast 26\% of inputs, but not on more than 32 \% of inputs.  Table \ref{tab:4} depicts that DEA-OPTDI is better than EEA-2 algorithm on atleast 99 \% of inputs whereas DEA-I is better than EEA-2 on atleast 54 \% of inputs but not on more than 72\% of inputs. Thus, DEA-OPTDI is better than DEA-I. Based on this result we can say that on random inputs of size less than or equal to 10 bits, we improve the DEA-I algorithm. 

\begin{table}[H]
    \centering
    \caption{Average CPU-cycle count in execution of the four algorithms}
    \label{tab:1}\begin{tabular}{p{2.9cm}|p{2.9cm}|p{2.9cm}|p{2.9cm}}
    \hline
    \multicolumn{4}{c}{\multirow{2}{*}{\begin{tabular}[c]{@{}c@{}}Comparison of algorithms on a 64 bit computer with intel core i5 \\2.60 GHz processor using C program and GMP library\end{tabular}}} \\
        \multicolumn{4}{c}{} \\
        
    \hline
          Average cpu cycle count in execution of DEA-OPTDI & Average cpu cycle count in execution of DEA-I & Average cpu cycle count in execution of EEA-2 & Average cpu cycles count in execution of EEA-I\\
         \hline
         
         4384.7 & 7877.5 & 7304.2 & 6129.7
         \\
         \hline
         
          4392.3 & 7061.3 & 7206.4 & 6028.2
         \\
         \hline

          4683.0 & 7650.7 & 7709.4 & 6559.2
         \\
         \hline
        4545.0  & 7296.0  & 7499.1  & 6232.7  \\ \hline
6431.0  & 10435.5  & 10153.5  & 8805.3  \\ \hline
4399.1  & 7422.3  & 7376.8  & 6099.0  \\ \hline
4506.7  & 7542.4  & 7529.5  & 6294.4  \\ \hline
4272.3  & 6727.0  & 6769.5  & 5648.2  \\ \hline
4199.4  & 6546.5  & 6754.8  & 5527.8  \\ \hline
3928.2  & 5850.2  & 6501.3  & 5203.6  \\ \hline
    \end{tabular}
    
\end{table}

\begin{table}[H]
     \centering
     \caption{Comparison of DEA-OPTDI and DEA-I with EEA-2 and EEA-I on same input parameters (performed 10 times) with 100000 random inputs, each time. In each input instance ($a,b,c$), $a$, $b$ and $c$ are chosen randomly between 1 to $2^{10}$ (both inclusive).}
     \label{tab:4}\begin{tabular}{p{2.9cm} | p{2.9cm} | p{2.9cm} | p{2.9cm}} 
     \hline 
     \multicolumn{4}{c}{\multirow{2}{*}{\begin{tabular}[c]{@{}c@{}}Comparison of algorithms on a 64 bit computer with intel core i5 \\2.60 GHz processor using C program and GMP library\end{tabular}}} \\
   
        \multicolumn{4}{c}{} \\
        \hline
          \multicolumn{4}{c}{\multirow{2}{*}
     {\begin{tabular}{p{5.3cm}|p{5.3cm}} 
    percentage of inputs on which DEA-OPTDI algorithm is better than A & percentage of inputs on which DEA-I algorithm is better than A 
 \end{tabular}}} \\
  \multicolumn{2}{c}{} \\
 
        \hline
         A=EEA-I & A=EEA-2 & A=EEA-I & A=EEA-2 \\
          \hline
       
        97.6  & 99.6  & 29.6  & 69.2  \\ \hline
97.2  & 99.5  & 30.1  & 69.8  \\ \hline
97.5  & 99.6  & 31.3  & 70.9  \\ \hline
97.3  & 99.6  & 30.7  & 71.8  \\ \hline
96.4  & 99.5  & 29.1  & 54.4  \\ \hline
96.8  & 99.6  & 26.6  & 60.4  \\ \hline
97.8  & 99.6  & 28.7  & 63.1  \\ \hline
97.0  & 99.7  & 27.8  & 62.7  \\ \hline
96.9  & 99.7  & 28.8  & 68.9  \\ \hline
98.4  & 99.7  & 31.1  & 76.9  \\ \hline
     \end{tabular}
     
 \end{table}

\section{Conclusion}
The number of recursions calls incurred in DEA-R algorithm is equal or less compared to extended euclidean algorithm. It does not imply that DEA-R takes equal or less execution time than extended euclidean algorithm, when executed on a computer. We propose a recursive function 
 of second order and use it to propose the  DEA-OPTD algorithm. We are able to compare the exact execution times taken by the two algorithms, because the inputs in each recursive call and number of recursions is same. In this theoretical analysis we can not find any dependency of better performance of DEA-OPTD algorithm on the value of $a
 _i$'s or $c$. Further investigation on inequality \ref{cond15} (in appendix 6.1) is required to conclude anything about dependency of better performance of DEA-OPTD algorithm on the values of $c$ or $a_i$'s. 
 \par In the implementation of DEA-OPTD algorithm, we compare it with DEA-R and two variants of extended euclid's algorithm on an specific input setting, where values of $a,b$ and $c$ are small ($\le2^{10}$). Although in theoretical analysis, we could not conclude any relation, in implementation we verify that
for the small values of input parameters, DEA-OPTD is the most efficient algorithm.  In future, we can work to reduce a large input to smaller input, so that we can take advantage of DEA-OPTD on all kinds of inputs.

\bibliographystyle{splncs04}
 \bibliography{mybib}

\section{Appendix}

\subsection{Comparison between DEA-R and DEA-OPTD Algorithm}

We assume that all the assignment operations take negligible time, and arithmetic operations take worst case time (i.e. with no optimization).Since we want to compare those instructions only which are different in both the algorithms, we compare line 9 of DEA-R algorithm with line 9 of DEA-OPTD algorithm. We assume that DEA-OPTD algorithm takes less time to execute than the original algorithm (DEA-R), then the following inequality holds for all the values of $j (1 \le j \le k-2)$. In addition to this, for at least one value of $j$, this will be an strict inequality:
\begin{multline*} cost\left(f(a_{j+2},a_{j+3}) -\left \lfloor \frac{a_j}{a_{j+1}} \right\rfloor f(a_{j+1},a_{j+2})\right)+cost \left(\left\lfloor \frac{a_j}{a_{j+1}} \right \rfloor\right)+
\\cost\left(\left \lfloor \frac{a_j}{a_{j+1}}\right \rfloor f(a_{j+1},a_{j+2})\right)\le cost \left(c-f(a_{j+1},a_{j+2})a_j\right)+ \\cost\left(\frac{c-f(a_{j+1},a_{j+2})a_j}{a_{j+1}}\right)+cost\left(f(a_{j+1},a_{j+2})a_j\right) \end{multline*}
Here the first cost on both sides of inequality is for subtraction operation followed by cost of division and then cost of multiplication respectively.
If we assume that the cost of addition (subtraction) and multiplication (division) of an  $m$ bit and $n$ bit integer are $m+n$ and $mn$ respectively then the above expression can be rewritten as follows:

\begin{multline*} \implies \log{\left|f(a_{j+2},a_{j+3})\right|}+ \log{\left|\left \lfloor \frac{a_j}{a_{j+1}} \right\rfloor f(a_{j+1},a_{j+2})\right|}+\log{\left|{a_j}\right|} \log{\left|{a_{j+1}} \right|}+
\\\log{\left|\left \lfloor \frac{a_j}{a_{j+1}}\right \rfloor\right|} \log{\left|f(a_{j+1},a_{j+2})\right|} \le  \log{\left|c\right|}+\log{\left|f(a_{j+1},a_{j+2})a_j\right|}+ \\\log{\left|{c-f(a_{j+1},a_{j+2})a_j}\right|}\log{\left|{a_{j+1}}\right|}+\log{\left|f(a_{j+1},a_{j+2})\right|}\log{\left|a_j\right|} \end{multline*}
The base of logarithms is $2$. Simplifying further, we get the following:

\begin{multline*} \implies \log{\left|f(a_{j+2},a_{j+3})\right|}+ \log{\left|\frac{1}{a_j}\left \lfloor \frac{a_j}{a_{j+1}} \right\rfloor \right|}+\log{\left|{a_j}\right|} \log{\left|{a_{j+1}} \right|}+
\\\log{\left|\left \lfloor \frac{a_j}{a_{j+1}}\right \rfloor\right|} \log{\left|f(a_{j+1},a_{j+2})\right|}-\log{\left|f(a_{j+1},a_{j+2})\right|}\log{\left|a_j\right|} \\ \le  \log{\left|c\right|}+\log{\left|{c-f(a_{j+1},a_{j+2})a_j}\right|}\log{\left|{a_{j+1}}\right|}\end{multline*}

\begin{multline*} \implies \log{\left|f(a_{j+2},a_{j+3})\right|}+ \log{\left|\frac{1}{a_j}\left \lfloor \frac{a_j}{a_{j+1}} \right\rfloor \right|}+\log{\left|{a_j}\right|} \log{\left|{a_{j+1}} \right|}+
\\\log{\left|f(a_{j+1},a_{j+2})\right|}\log{\left|\frac{1}{a_j}\left \lfloor \frac{a_j}{a_{j+1}} \right\rfloor\right|} \\ \le  \log{\left|c\right|}+\log{\left|{c-f(a_{j+1},a_{j+2})a_j}\right|}\log{\left|{a_{j+1}}\right|}\end{multline*}
On simplifying further, we get the following expression: 

\begin{multline*} 
\left|f(a_{j+2},a_{j+3})\left(\frac{1}{a_j}\left\lfloor \frac{a_j}{a_{j+1}}\right\rfloor\right)^{\left(1+\log{\left| f(a_{j+1},a_{j+2})\right|}\right)}\right| 
 \le \left|c\left(\frac{c}{a_j}-f(a_{j+1},a_{j+2})\right)^{\log\left|{a_{j+1}}\right|}\right|\end{multline*}

Assuming $f(a_{j+1},a_{j+2})=x$ and $\left\lfloor\frac{a_j}{a_{j+1}}\right\rfloor=y$

\begin{multline*}
\left| f(a_{j+2},a_{j+3}) \left(\frac{y}{a_j}\right)^{1+\log |x|}  \right| \le 
\left|c\left(\frac{c}{a_j}-x\right)^{\log{|a_{j+1}|}}\right|
\end{multline*}

\begin{multline*} \implies
\left| f(a_{j+2},a_{j+3}) \left(\frac{y}{a_j}\right)^{1+\log |x|}  \right| \le 
\left|c\left(\frac{c}{a_j}-\frac{c-f(a_{j+2},a_{j+3})a_{j+1}}{a_{j+2}}\right)^{\log{|a_{j+1}|}}\right|
\end{multline*}

\begin{equation}\label{cond15} \implies
\left| f(a_{j+2},a_{j+3}) \left(\frac{y}{a_j}\right)^{1+\log |x|}  \right| \le 
\left|c\left(\frac{ca_{j+2}-ca_j}{a_ja_{j+2}}+\frac{f(a_{j+2},a_{j+3})a_{j+1}}{a_{j+2}}\right)^{\log{|a_{j+1}|}}\right|
\end{equation}

\begin{equation}\label{cond16} \implies
\left( f(a_{j+2},a_{j+3}) \left(\frac{y}{a_j}\right)^{1+\log |x|}  \right)^2 \le 
\left(c\left(\frac{ca_{j+2}-ca_j}{a_ja_{j+2}}+\frac{f(a_{j+2},a_{j+3})a_{j+1}}{a_{j+2}}\right)^{\log{|a_{j+1}|}}\right)^2
\end{equation}

\begin{multline}\label{cond17} \implies
\left( f(a_{j+2},a_{j+3}) \left(\frac{y}{a_j}\right)^{1+\log |x|}  \right)^2 \le 
\\
c^2\left(\left(\frac{ca_{j+2}-ca_j}{a_ja_{j+2}}\right)^2+\left(\frac{f(a_{j+2},a_{j+3})a_{j+1}}{a_{j+2}}\right)^2
+2\left(\frac{ca_{j+2}-ca_j}{a_ja_{j+2}}\right)\left(\frac{f(a_{j+2},a_{j+3})a_{j+1}}{a_{j+2}}\right) \right)^{\log{|a_{j+1}|}}
\end{multline}

Note that $\frac{y}{a_j}$ is less than 1
and since $a_{j+1}> a_{j+2}$,

\begin{equation} \label{ineq17}
\left(f(a_{j+2},a_{j+3})\right)^2 \left(\frac{y}{a_j}\right)^{2+2\log |x|} < \left(\frac{f(a_{j+2},a_{j+3})a_{j+1}}{a_{j+2}}\right)^2    
\end{equation}

We know that $a_{j+2}<a_j$, so $\left(\frac{ca_{j+2}-ca_j}{a_ja_{j+2}}\right)<0$ if and only if $c>0$. Since $a_{j+1}$ and $a_{j+2}$ both are positive,  
   $\left(\frac{f(a_{j+2},a_{j+3})a_{j+1}}{a_{j+2}}\right)<0$ if and only if  $f(a_{j+2},a_{j+3})<0$. 
   \newline Hence  $2\left(\frac{ca_{j+2}-ca_j}{a_ja_{j+2}}\right)\left(\frac{ca_{j+2}-ca_j}{a_ja_{j+2}}\right)>0$ implies one of the following:
   \begin{enumerate}
       \item $c>0$ and $f(a_{j+2},a_{j+3})<0$

      \item $c<0$ and $f(a_{j+2},a_{j+3})>0$, 
   \end{enumerate}
Thus in any of the above two conditions, inequality \ref{cond17} is satisfied. For the other two conditions, we need to further analyse values of $c$ and $f$, so that 

$$\left(\frac{ca_{j+2}-ca_j}{a_ja_{j+2}}\right)^2+2\left(\frac{ca_{j+2}-ca_j}{a_ja_{j+2}}\right)\left(\frac{f(a_{j+2},a_{j+3})a_{j+1}}{a_{j+2}}\right) >0$$
We can control the value of $c$ but not the value of function $f$. So, the theoretical analysis does not conclude that by increasing or decreasing the value of $c$, relation \ref{cond17} will necessarily be satisfied. It also does not conclude anything about which  values of $c,a$ or $b$ satisfy inequality \ref{cond17}.

\end{document}